\documentclass[11pt]{article}

\usepackage{palatino}
\usepackage{mathpazo}
\usepackage{braket}
\usepackage{amsfonts}
\usepackage{amssymb}
\usepackage{amsmath}
\usepackage{latexsym}
\usepackage{amsthm}
\usepackage[usenames]{color}
\usepackage{hyperref}
\usepackage{optidef}
\usepackage{cleveref}
\usepackage{relsize}
\usepackage{graphicx}
\usepackage{tikz}
\usepackage{authblk}
\usepackage{color,graphicx}
\usepackage[normalem]{ulem}
\usepackage{dsfont}
\usepackage{algorithm}
\usepackage{algpseudocode}

\theoremstyle{definition}

\hypersetup{pdfpagemode=UseNone}

\newtheorem{theorem}{Theorem}[section]
\newtheorem{lemma}[theorem]{Lemma}

\newtheorem{proposition}[theorem]{Proposition}

\newtheorem{corollary}[theorem]{Corollary}
\theoremstyle{definition}

\newtheorem{claim}[theorem]{Claim}

\newtheorem*{theorem*}{Theorem}
\newtheorem*{unnumprop}{Proposition}
\newtheorem{obs}[theorem]{Observation}

\newcommand{\A}{\mathcal{A}}

\newcommand{\D}{\textup{D}} 

\newcommand{\tr}{\operatorname{tr}}
\newcommand{\Tr}{\operatorname{tr}}

\newcommand{\bracket}[2]{\langle #1 | #2 \rangle} 
\newcommand{\kb}[1]{\ket{#1}\bra{#1}}

\newcommand{\norm}[1]{\left\lVert#1\right\rVert}

\DeclareMathOperator*{\argmin}{arg\,min}

\begin{document}

\title{\bf Improved regret bounds for structured online learning of quantum states}

\author[1]{Akshay Bansal\footnote{\href{mailto:akshay.bansal@tuwien.ac.at}{\texttt{akshay.bansal@tuwien.ac.at}}}}
\author[2]{Jiahui Liu\footnote{\href{mailto:jliu@fujitsu.com}{\texttt{jliu@fujitsu.com}}}}

\affil[1]{Technische Universit\"at Wien, Austria}
\affil[2]{Fujitsu Research of America, USA}

\date{\today}

\maketitle

\begin{abstract}
Quantum state tomography is fundamental to quantum information processing but becomes infeasible at scale due to the exponential growth of the state space. Shadow tomography alleviates this challenge by focusing on predicting measurement outcomes rather than reconstructing the full state. Its online variant models adaptive and potentially adversarial measurement scenarios, where a learner sequentially predicts outcomes while competing with the best fixed quantum state in hindsight.

We show that exploiting additional structure in the measurements leads to significantly stronger regret guarantees. In particular, under the assumption that the adversarial measurements have bounded Frobenius norm, we analyze Projected Online Gradient Descent and derive regret bounds that depend on intrinsic structural properties, such as rank or sparsity, rather than the ambient Hilbert space dimension.
As a complementary result, we show that one can achieve logarithmic regret, independent of both the number of qubits and measurement outcomes, for multi-outcome measurements under squared $L_2$ loss. These results demonstrate that incorporating realistic structural assumptions can substantially enhance the learnability of quantum states in online environments.
\end{abstract}

\section{Introduction}\label{sec:introduction}

State tomography is a fundamental task in quantum computation and quantum information, essential for both theoretical research and practical applications in a number of disciplines. 
It aims at finding a classical description of an unknown quantum state (referred to as lab state), given a finite number of copies of the state. 
State tomography plays a crucial role in verifying the outcomes of quantum algorithms, evaluating quantum error correction codes, and assessing quantum states generated in cryptographic protocols, including but not limited to experiments in quantum error correction, quantum key distribution as well as validating correctness and performance in quantum chemistry algorithms.

However, general state tomography is highly inefficient, not only computationally but also in terms of number of samples one needs, 
due to the intrinsic challenge of learning a complete description of an arbitrary, unknown quantum state -- classically described by a matrix that scales exponentially in the number of qubits. The prior work by~\cite{haah2016}  established an exponential (in the number of qubits) lower bound on the number of copies required for this process when the lab state has no known structure. The same work also provided a measurement scheme that matches this lower bound and thus making it sample optimal.

To handle these challenges in real-world experiments, \cite{aaronson2017shadow} introduced the concept of shadow tomography. Instead of reconstructing the full density matrix of a quantum state close in trace distance, shadow tomography focuses on extracting specific properties of the target state. Specifically, given multiple copies of an unknown quantum state and a set of 2-outcome measurements (known apriori), the goal is to produce a representative quantum state such that the measurement outcome distributions for the representative state closely approximate those of the unknown quantum state.

Shadow tomography significantly reduces the sample complexity required for learning quantum states. However, as noted in \cite{aaronsononline}, it has practical limitations. Firstly, it does not account for adversarial environments or those that evolve over time as all the measurements are considered to be drawn independently from a fixed distribution.
Furthermore, not all quantum measurements are accessible or feasible in a given experimental setup, as their feasibility is dictated by physical constraints. As experimental control advances, more sophisticated measurement techniques become available. To address these considerations, \cite{aaronsononline} formalized a notion that incorporates real-time measurement constraints into tomography, framing the problem as an instance of online learning of quantum states where they analyzed standard algorithms such as Regularized Follow-the-Leader (RFTL) and the Matrix Multiplicative Weights (MMW) method. These approaches were shown to achieve a regret bound of $\mathcal{O}(L\sqrt{Tn})$ (formally defined in~\Cref{sec:background}), where $n$ denotes the number of qubits and $T$ the time horizon. We will discuss more real-world applications of online learning of quantum states in \Cref{sec:background}.

\paragraph{Motivation for exploiting structure in quantum learning}

While~\cite{aaronsononline} considered a much general setting of online learning of quantum states, many practical applications admit additional structural properties that can be exploited to obtain improved regret bounds in terms of the number of qubits and time horizon.
We focus on two sources of structure that are not only relevant to practical quantum computing but can also be leveraged beneficially in the regret bound analysis. 

The first concerns the measurement operators themselves. In many applications, the relevant effects are not arbitrary full-dimensional operators; instead, they may be low-rank, sparse in a preferred basis, or supported on a physically meaningful truncated subspace. 
For example, it is now well established that \emph{optimal} full-state tomography can be achieved using symmetric informationally complete (SIC) POVMs~\cite{ohno2015examples} (rank-one operators), and this has also been demonstrated experimentally on an ion-trap quantum processor~\cite{stricker2022experimental}. Similarly, structured truncations of parity- and syndrome-type measurements arise in quantum error correction settings~\cite{le2023high,mcintyre2024}. More broadly, Hilbert--Schmidt/Frobenius geometry also plays a role in compressed-sensing tomography (with sparse effects in Pauli basis) and distinguishability problems~\cite{gross2010,flammia2012quantum,spehner2014}.

A second source of exploitable structure in quantum learning arises from the choice of loss function. Previous works consider a more general class of loss functions that are assumed only to be convex and Lipschitz continuous. In contrast, many practical applications employ metric losses, such as the $\ell_1$ and $\ell_2$ distances, to quantify the discrepancy between the learner's predicted outcome and the actual measurement outcome~\cite{wang2021,kuzmin2025}. These losses naturally interact with vector-valued predictions in the $k$-outcome setting and can be exploited to obtain sharper regret guarantees.

Motivated by the high-level objective of exploiting structural properties to improve the learnability of quantum states in the online setting, we investigate whether additional assumptions on the measurement operators or the loss functions can lead to stronger regret guarantees. In particular, we address the following research questions:
\begin{enumerate}
    \item~\label{item:structEffects} Can improved regret bounds be obtained when the adversarial measurement operators possess specific structural properties, such as sparsity or low rank?
    \item~\label{item:structLoss} By focusing on particular distance-based loss functions, can sharper regret bounds be achieved using existing algorithms for commonly employed metrics such as the $L_1$ or $L_2$ distances?
\end{enumerate}
While the first question is the primary focus of this work, we address the second only briefly and leave a more general treatment to an ongoing research \footnote{The $L_1$ and $L_2$ losses considered here are only representative examples of the loss functions commonly studied in online learning. In future work, it would be interesting to extend our analysis to a broader class of structured loss functions, including strictly convex losses, kernel-based losses, exp-concave losses, and other loss functions frequently encountered in the online learning literature.
}.

\section{Our Results}\label{sec:results}

Our contributions are twofold and directly address the research questions outlined above. In particular, we strengthen the general regret guarantees established in~\cite{aaronsononline} by incorporating additional structural assumptions on the measurement operators and by specializing to commonly used loss functions.

Our first result concerns the setting in which the adversarial measurement operators admit additional structure. We analyze the Projected Online Gradient Descent (OGD) algorithm~\cite[Chapter 3]{hazan2016} for the problem of online learning of quantum states under the assumption that the measurement operators have bounded Frobenius norm (denoted by $\norm{.}_{F}$). Our analysis yields the following informal guarantee.

\begin{theorem}[Informal]\label{thm:regretBoundviaOGD}
Let $T$ denote the time horizon and let $n$ be the number of qubits. Suppose that at each round $t \in [T]$, the loss function is convex and $L$-Lipschitz. Further assume that the adversarial measurement operators satisfy $\norm{E_t}_{F} \leq B$ for all $t \in [T]$. Then the regret (formally defined in~\Cref{sec:background}) satisfies
\[
R_T \leq \mathcal{O}\big(LB\sqrt{T}\big).
\]
\end{theorem}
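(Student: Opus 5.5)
The plan is to instantiate the standard regret analysis of Projected Online Gradient Descent~\cite[Chapter 3]{hazan2016} over the convex set $\mathcal{K}$ of $2^n\times 2^n$ density matrices, equipped with the Frobenius (Hilbert--Schmidt) inner product $\langle X,Y\rangle=\tr(X^\dagger Y)$. In this geometry $\mathcal{K}$ is a closed convex subset of a real Euclidean space (Hermitian matrices), so projection onto $\mathcal{K}$ is well defined and nonexpansive. The generic OGD bound states that, with step size $\eta=D/(G\sqrt{T})$,
\[
R_T \le \tfrac{3}{2} G D \sqrt{T},
\]
where $D$ is the Frobenius diameter of $\mathcal{K}$ and $G$ bounds the Frobenius norm of the gradients. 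The proof therefore reduces to bounding $D$ and $G$ independently of the dimension $2^n$.

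\textbf{Diameter.} For density matrices $\rho,\sigma$ we have
\[
\norm{\rho-\sigma}_F^2=\tr\rho^2+\tr\sigma^2-2\tr(\rho\sigma)\le 2,
\]
using $\tr\rho^2\le 1$ and $\tr(\rho\sigma)\ge 0$, since both matrices are PSD. Hence $D\le\sqrt2$, a constant. This is where Frobenius geometry beats the trace-norm/entropy geometry: there is no $\log 2^n=n$ factor.

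\textbf{Gradient bound.} The round-$t$ loss has the form $f_t(\rho)=\ell_t(\tr(E_t\rho))$ with $\ell_t$ convex and $L$-Lipschitz in its scalar argument. In the $k$-outcome case it is $\ell_t$ applied to the vector $(\tr(E_t^{(i)}\rho))_i$, which is handled the same way. Then $f_t$ is convex in $\rho$, being a convex function composed with a linear map. A subgradient is $\nabla f_t(\rho)=\ell_t'(\tr(E_t\rho))\,E_t$ with $|\ell_t'|\le L$, so
\[
\norm{\nabla f_t}_F\le L\norm{E_t}_F\le LB,
\]
giving $G=LB$.

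\textbf{Combination.} Plugging in, we obtain $R_T\le \tfrac32\sqrt2\,LB\sqrt T=\mathcal{O}(LB\sqrt T)$.

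\textbf{Expected main obstacle.} The main point to handle carefully is the precise loss model from \Cref{sec:background}. For multi-outcome measurements the Lipschitz constant $L$ must be taken with respect to the norm in which the chain rule yields $\norm{\nabla f_t}_F\le L\max_i\norm{E^{(i)}_t}_F$, or possibly a sum over outcomes. I would first try to define the Lipschitz condition in $\ell_1$ on the prediction vector, so that the dual norm is $\ell_\infty$ and the bound stays $LB$. I would also verify that the projection step onto $\mathcal{K}$ is legitimate. It is: one eigendecomposes and projects the eigenvalues onto the simplex. This step affects only computation, not the regret.
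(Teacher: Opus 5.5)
Your proposal is correct and follows essentially the same route as the paper: projected OGD in the Hilbert--Schmidt geometry, a dimension-free Frobenius diameter, and $\|\nabla_t\|_F = |\ell_t'|\,\|E_t\|_F \le LB$. The paper re-derives the one-step inequality and telescopes under the fixed step $\eta = D/(LB\sqrt{T})$ to get $DLB\sqrt{T}$, so your constant $\tfrac32$, which comes from Hazan's decreasing steps $\eta_t = D/(G\sqrt{t})$, is valid but loose for the fixed step you chose; your diameter bound $D \le \sqrt{2}$ is sharper than the paper's $D \le 2$, and the multi-outcome aside is not needed for this statement (there, to get $L\max_i \|E_t^{(i)}\|_F$ you would need Lipschitzness with respect to $\ell_\infty$, whose dual is $\ell_1$, not with respect to $\ell_1$, which only gives $L\sum_i \|E_t^{(i)}\|_F$).
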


We next examine the computational complexity of the successive updates produced by Projected Online Gradient Descent (OGD). In particular, we show that the cost of each OGD update is no worse than that of the Regularized-Follow-the-Leader (RFTL) updates employed in the regret analysis of \cite{aaronsononline}. This demonstrates that, even in the quantum setting, the iterates of OGD can be computed efficiently, making the algorithm computationally attractive in addition to its favorable regret guarantees. More formally, we have the following observation.

\begin{obs}
The classical computational complexity of each successive guesstimate update produced by Projected Online Gradient Descent is $\mathcal{O}(d^\omega)$, where $2 \leq \omega \leq 2.37$ denotes the matrix multiplication exponent and $d \times d$ is the dimension (size) of the effects.
\end{obs}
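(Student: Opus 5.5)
The plan is to write out one step of Projected OGD and count the arithmetic operations in each part. The update is $\omega_{t+1} = \Pi_{\mathcal{K}}\big(\omega_t - \eta \nabla_t\big)$, where $\mathcal{K}$ is the set of $d\times d$ density matrices (Hermitian, positive semidefinite, trace one) and $\Pi_{\mathcal{K}}$ is the Frobenius-norm projection onto $\mathcal{K}$. Since the loss at round $t$ is $\ell_t(\tr(E_t\omega))$ with $\ell_t$ scalar and convex, the gradient is $\nabla_t = \ell_t'(\tr(E_t\omega_t))\,E_t$. Computing $\tr(E_t\omega_t)$ needs only the diagonal of the product, so it costs $\mathcal{O}(d^2)$. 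Evaluating $\ell_t'$ is $\mathcal{O}(1)$, and forming $Y_t=\omega_t-\eta\nabla_t$ is $\mathcal{O}(d^2)$. The gradient step is therefore cheap, and the whole cost lies in the projection.

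Next I would characterize the projection. Let $Y_t = U\Lambda U^\dagger$ be the spectral decomposition of the Hermitian matrix $Y_t$. Frobenius projection onto the spectraplex is unitarily invariant. By the von Neumann trace inequality, or equivalently by a Lagrangian argument on the convex program $\min_{X\in\mathcal{K}}\norm{X-Y_t}_F^2$, the minimizer commutes with $Y_t$ and has the form $U\,\mathrm{diag}(p)\,U^\dagger$. Here $p$ is the Euclidean projection of the eigenvalue vector $\lambda$ onto the probability simplex $\Delta_d$. I would make this precise by writing $\norm{X-Y_t}_F^2=\norm{X}_F^2+\norm{Y_t}_F^2-2\,\mathrm{Re}\tr(XY_t)$. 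The inequality $\tr(XY_t)\le \langle \mu^\downarrow,\lambda^\downarrow\rangle$, where $\mu$ is the spectrum of $X$, shows that aligning the eigenvectors of $X$ with those of $Y_t$ is optimal. This reduces the problem to a vector projection.

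Then I count the cost. The eigendecomposition of a $d\times d$ Hermitian matrix can be computed to any fixed precision in $\mathcal{O}(d^\omega)$ time, using the fast-matrix-multiplication-based eigensolvers of Demmel–Dumitriu–Holtz. The simplex projection of $\lambda$ uses the standard sort-and-threshold algorithm and costs $\mathcal{O}(d\log d)$. Rebuilding $U\,\mathrm{diag}(p)\,U^\dagger$ takes one matrix multiplication, which is $\mathcal{O}(d^\omega)$. The total per round is $\mathcal{O}(d^2)+\mathcal{O}(d^\omega)+\mathcal{O}(d\log d)=\mathcal{O}(d^\omega)$. For comparison with the RFTL updates of \cite{aaronsononline}, note that those also require a matrix exponential or logarithm of a $d\times d$ Hermitian matrix, which again needs an eigendecomposition. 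So OGD is no worse.

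The main obstacle is the eigendecomposition step. Exact diagonalization is not a finite arithmetic computation, so the claim of $\mathcal{O}(d^\omega)$ must be read as holding up to precision, i.e.\ with polylogarithmic factors in $1/\epsilon$. I would rely on the known result that diagonalization is as hard as matrix multiplication in this approximate sense. I would also note that if one instead uses the classical $\mathcal{O}(d^3)$ eigensolver, the bound still matches the stated $\omega$ range at its upper end.
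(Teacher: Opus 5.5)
Your proposal is correct and follows the paper's argument: the per-round cost is dominated by diagonalizing $\omega_t-\eta\nabla_t$ in $\mathcal{O}(d^{\omega})$ time, after which the eigenvalues are projected onto the simplex in $\mathcal{O}(d\log d)$ time, exactly as in \Cref{lemma:quadOverSimplex} and \Cref{apd:ComplexityOGD}. Your extra bookkeeping (the gradient and reconstruction costs, and the finite-precision caveat) and your von Neumann trace-inequality derivation of the spectral form of the projection are welcome refinements. That derivation also avoids the paper's support-based \Cref{fact:FrobeniusProjection}, which is false as stated (e.g.\ $P=\mathrm{diag}(2,\tfrac12)$ projects to $\mathrm{diag}(1,0)$), though your closing remark is off: an $\mathcal{O}(d^3)$ eigensolver corresponds to exponent $3$, which lies outside the stated range $\omega\le 2.37$.
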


A short proof of this observation is depicted in~\Cref{apd:ComplexityOGD}.

Combining~\Cref{thm:regretBoundviaOGD} with the general regret bound of~\cite{aaronsononline} yields the following immediate consequences.

\begin{corollary}
If the adversarial effects satisfy $\operatorname{rank}(E_t) \leq r$ for all $t \in [T]$, then
\[
R_T \leq \mathcal{O}\big(L\sqrt{\min\{r,n\}\,T}\big).
\]
\end{corollary}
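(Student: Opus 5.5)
The plan is to obtain the corollary by running two algorithms in parallel, or equivalently by taking whichever of two available guarantees is smaller, and to reduce the rank assumption to a Frobenius-norm bound so that \Cref{thm:regretBoundviaOGD} applies.

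First, I will translate the rank bound into a Frobenius bound. Every effect satisfies $0 \preceq E_t \preceq I$, so each eigenvalue $\lambda_i(E_t)$ lies in $[0,1]$. Hence
\[
\norm{E_t}_F^2 = \sum_{i} \lambda_i(E_t)^2 \leq \operatorname{rank}(E_t) \cdot \max_i \lambda_i(E_t)^2 \leq r .
\]
Thus $\norm{E_t}_F \leq \sqrt{r}$ for every $t \in [T]$. Applying \Cref{thm:regretBoundviaOGD} with $B=\sqrt{r}$, Projected OGD attains $R_T \leq \mathcal{O}(L\sqrt{rT})$. I will note in passing that the step size of OGD must be tuned with $B=\sqrt r$, which is known in advance since $r$ is part of the assumption.

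Second, I will invoke the general bound of \cite{aaronsononline}: RFTL (or MMW) attains $R_T \leq \mathcal{O}(L\sqrt{Tn})$ with no structural assumption on the effects. Since $r$ and $n$ are both known to the learner, it will run OGD when $r \leq n$ and RFTL otherwise. Either way the regret is at most $\mathcal{O}(L\sqrt{\min\{r,n\}\,T})$, which is the claim.

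The only delicate point is the constant hidden in the $\mathcal{O}$ of \Cref{thm:regretBoundviaOGD}. I must check that it does not introduce a dimension dependence, for instance through the diameter of the set of density matrices. In Frobenius norm that diameter is at most $\sqrt{2}$, because $\norm{\rho-\sigma}_F^2 \leq \tr(\rho^2)+\tr(\sigma^2) \leq 2$. So the standard OGD bound $\tfrac{3}{2}GD\sqrt{T}$, with gradient norm $G \leq LB$, is indeed dimension-free. If I wanted to avoid assuming that $r$ is known, I would first try the doubling trick or combining the two learners via a standard experts wrapper, at the cost of constant factors. I expect this to be routine.
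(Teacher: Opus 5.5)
Your proposal is correct and matches the argument the paper has in mind when it calls the corollary an immediate consequence: use $0 \preceq E_t \preceq \mathds{1}$ and $\operatorname{rank}(E_t)\le r$ to get $\norm{E_t}_F \le \sqrt{r}$, apply the OGD bound with $B=\sqrt{r}$, and take the better of this and the $\mathcal{O}(L\sqrt{Tn})$ RFTL guarantee of \cite{aaronsononline}. Your Frobenius-diameter bound of $\sqrt{2}$ is slightly sharper than the paper's $D=2$, but this only changes constants.
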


In an earlier work by~\cite{chen2006more}, a similar regret bound for low-rank effects was discovered by analyzing Regularized-Follow-the-Leader (RFTL) method with Tallis-2 entropy as a regularizer, but this work of ours largely generalizes it to include various other structured effects.

\begin{corollary}
If the adversarial effects are $\kappa$-sparse for all $t \in [T]$, then
\[
R_T \leq \mathcal{O}\big(L\sqrt{\min\{\kappa,n\}\,T}\big).
\]
\end{corollary}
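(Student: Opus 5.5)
The corollary follows by combining \Cref{thm:regretBoundviaOGD} with the general $\mathcal{O}(L\sqrt{Tn})$ bound of~\cite{aaronsononline}, and then running whichever algorithm has the smaller guarantee. The only real work is to bound the Frobenius norm of a $\kappa$-sparse effect by $\mathcal{O}(\sqrt{\kappa})$, so that $B=\mathcal{O}(\sqrt{\kappa})$ in \Cref{thm:regretBoundviaOGD}.

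I will read ``$\kappa$-sparse'' as: $E_t$ has at most $\kappa$ nonzero entries in the computational basis. Since $E_t$ is an effect, $0 \preceq E_t \preceq I$, so $\norm{E_t}_{op}\le 1$. Every matrix entry is bounded by the operator norm, $|(E_t)_{ij}| = |\langle i|E_t|j\rangle| \le \norm{E_t}_{op} \le 1$. Hence
\[
\norm{E_t}_F^2=\sum_{i,j}|(E_t)_{ij}|^2\le \kappa ,
\]
which gives $B=\sqrt{\kappa}$. Projected OGD with step size $\eta_t \propto 1/(LB\sqrt{t})$ over the set of density matrices, whose Frobenius diameter is at most $\sqrt2$, then yields $R_T\le \mathcal{O}(L\sqrt{\kappa T})$ by \Cref{thm:regretBoundviaOGD}.

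If sparsity is instead meant in the Pauli basis, with at most $\kappa$ nonzero Pauli coefficients $E_t=\sum_P c_P P$, the same bound holds. Take the normalized Frobenius norm $\norm{E}_F^2/d$ as the relevant geometry, consistent with the Pauli normalization. Then $c_P = \tr(PE_t)/d$ gives $|c_P|\le \norm{E_t}_{op}\le 1$, and Pauli orthogonality gives $\norm{E_t}_F^2 = d\sum_P|c_P|^2\le d\kappa$. I expect this normalization issue to be the main obstacle: one must check that the Frobenius scale used in \Cref{thm:regretBoundviaOGD} matches the sparsity notion, so that no hidden factor of $d=2^n$ appears. If it does appear, I would fall back to the computational-basis reading, which is clean.

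Finally, for the minimum, the learner uses the RFTL/MMW algorithm of~\cite{aaronsononline} when $n<\kappa$ and projected OGD otherwise. Both $\kappa$ and $n$ are known in advance, so this choice can be made before round one. The result is $R_T\le \mathcal{O}\big(L\sqrt{\min\{\kappa,n\}\,T}\big)$.
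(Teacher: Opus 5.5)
Your argument is correct and is the route the paper leaves implicit when it calls this an immediate consequence: $|(E_t)_{ij}|\le\norm{E_t}_{op}\le 1$ gives $\norm{E_t}_F\le\sqrt{\kappa}$, so \Cref{theorem:PGDRegret} with $B=\sqrt{\kappa}$ yields $\mathcal{O}(L\sqrt{\kappa T})$, and choosing in advance between OGD and the $\mathcal{O}(L\sqrt{Tn})$ algorithm of~\cite{aaronsononline} gives the minimum. Drop the Pauli-basis aside, though. Rescaling the Frobenius norm cannot remove the $\sqrt{d}$ factor, because the diameter shrinks by $\sqrt{d}$ exactly as the dual-norm gradient bound grows by $\sqrt{d}$. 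In fact the statement is false under Pauli sparsity: the Pauli $2$-sparse effects $\tfrac12(\mathds{1}-Z_i)$, with random-sign linear losses $\ell_t(z)=\pm Lz$, already force $\Omega(L\sqrt{nT})$ regret for $T\ge n$, so the corollary holds only under your entrywise reading in a fixed orthonormal basis.
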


In particular, these bounds recover the previously known guarantees for general measurements, while yielding potentially sharper rates when the measurement operators are low rank or sparse. Such structured measurements arise naturally in many quantum information settings, as discussed in~\Cref{sec:introduction}.


Our follow-up result focuses on the multi-outcome measurement setting and leverages a specific structure in the loss function. We consider the Follow-the-Leader (RFTL) algorithm under the squared $L_2$-norm loss to obtain a logrithmic regret bound.

\begin{proposition}[Informal]
Let $T$ be the time horizon. At each round $t \in [T]$, the adversary selects a $K$-outcome measurement $(E_{t,1}, E_{t,2}, \ldots, E_{t,K})$. The learner incurs the squared $L_2$ loss
\[
\norm{x_t - b_t}_2^2,
\]
where
\[
b_t = 
\begin{bmatrix}
\operatorname{tr}(E_{t,1}\rho) \\
\operatorname{tr}(E_{t,2}\rho) \\
\vdots \\
\operatorname{tr}(E_{t,K}\rho)
\end{bmatrix}
\]
denotes the vector of true measurement outcome probabilities for an unknown quantum state $\rho$ and $x_t \in [0,1]^{K}$ is the learner’s prediction. Then the regret of the RFTL algorithm is bounded by
\[
R_T \leq \mathcal{O}(\log T).
\]
\end{proposition}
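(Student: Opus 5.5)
We read the regret as the learner's cumulative loss minus that of the best fixed state in hindsight. The prediction is $x_t = (\operatorname{tr}(E_{t,1}\omega_t),\dots,\operatorname{tr}(E_{t,K}\omega_t))$ for a guesstimate density matrix $\omega_t$, and the comparator is $\varphi$ in the set $\mathcal{C}_n$ of density matrices. The loss at round $t$ is then the convex function
\[
f_t(\omega) = \sum_{k=1}^K \bigl(\operatorname{tr}(E_{t,k}\omega) - b_{t,k}\bigr)^2
\]
of the state. The plan is to treat this as online convex optimization over $\mathcal{C}_n$ with losses that are \emph{exp-concave}, and to run Follow-the-Leader with a small quadratic regularizer. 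This is the online Newton/FTL analysis of \cite[Chapter 4]{hazan2016}, applied to the linear map $\mathcal{E}_t:\omega\mapsto(\operatorname{tr}(E_{t,k}\omega))_k$.

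\textbf{Step 1: bound the value range of the losses.} Since $\sum_k E_{t,k}=I$ and each $E_{t,k}\succeq 0$, both $\mathcal{E}_t(\omega)$ and $b_t$ are probability vectors. Hence $\norm{\mathcal{E}_t(\omega)-b_t}_2 \le \sqrt2$, a bound that does not depend on $K$ or $n$.

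\textbf{Step 2: establish exp-concavity.} Write $f_t(\omega)=g(\mathcal{E}_t(\omega))$ with $g(y)=\norm{y-b_t}_2^2$. On the ball where $\norm{y-b_t}_2^2\le 2$, the function $g$ is $\alpha$-exp-concave with $\alpha=1/(2\cdot 2)$, which follows from the standard fact that $(y-b)^2$ is $1/(2D^2)$-exp-concave. Exp-concavity is preserved under composition with a linear map, so each $f_t$ is $\alpha$-exp-concave on $\mathcal{C}_n$ with a constant independent of $n$ and $K$.

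\textbf{Step 3: run FTL with a regularizer.} Take
\[
\omega_{t+1}=\argmin_{\omega\in\mathcal{C}_n}\sum_{s\le t} f_s(\omega)+\epsilon\norm{\omega}_F^2 .
\]
Use the be-the-leader lemma to bound the regret by
\[
\sum_t \bigl(f_t(\omega_t)-f_t(\omega_{t+1})\bigr) + \epsilon\cdot\mathrm{diam}^2 .
\]
The diameter term is at most $2\epsilon$, because $\norm{\omega}_F\le 1$. For each stability term, apply the exp-concave quadratic lower bound
\[
f_t(y)\ge f_t(x)+\nabla_t\cdot(y-x)+\tfrac{\beta}{2}\bigl(\nabla_t\cdot(y-x)\bigr)^2 ,
\]
and sum the resulting terms using the elliptical potential argument. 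This gives
\[
\sum_t \nabla_t^\top A_t^{-1}\nabla_t \le \log\det(A_T)/\det(A_0),
\]
with $A_t=\epsilon I+\beta\sum_{s\le t}\nabla_s\nabla_s^\top$.

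\textbf{Main obstacle.} The $\log\det$ term naively scales like $d^2\log T$ with $d=2^n$, which would destroy the claimed independence of $n$. The fix is to work in the space of the outputs $y=\mathcal{E}_t(\omega)$ rather than the state. The gradients have the form $\nabla_t=\mathcal{E}_t^*(2(y_t-b_t))$, so each loss depends on the state only through the $K$-dimensional vector $\mathcal{E}_t(\omega)$. Writing $\nabla_s\nabla_s^\top$ as a sum of rank-one terms, I would bound the potential by $\sum_t \operatorname{tr}(A_t^{-1}\nabla_t\nabla_t^\top)$. Using Step 1 together with concavity of $\log\det$ and $\operatorname{tr}\nabla_t\nabla_t^\top\le 8$ (since $\norm{\mathcal{E}_t^*}\le1$ on $\ell_2$), I would aim to show that the effective dimension entering the bound is a constant. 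If that fails, the fallback is to state the bound as $\mathcal{O}(\min\{K,d^2\}\log T)$ and to note that it is $\mathcal{O}(\log T)$ for fixed $K$. With $\epsilon=1$ and $\beta$ set from Step 2, this yields $R_T=\mathcal{O}(\log T)$.
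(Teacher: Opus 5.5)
The step that fails is the one you flag as the main obstacle. Under your reading of the regret it cannot be repaired.

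Passing to ``output space'' would help only if the linear map were fixed, but $\mathcal{E}_t$ changes every round. The gradient $\nabla_t=\mathcal{E}_t^*(2(y_t-b_t))$ lies in $\operatorname{span}\{E_{t,1},\dots,E_{t,K}\}$, a subspace that moves with $t$. So $A_t-\epsilon I$ can reach rank $\min\{tK,d^2\}$, and the log-det potential is genuinely of order $d^2\log T$. Your fallback $\mathcal{O}(\min\{K,d^2\}\log T)$ confuses $K$ with $tK$. (Your bound $\operatorname{tr}\nabla_t\nabla_t^\top\le 8$ is also wrong: in the construction below $\norm{\nabla_t}_F\ge\sqrt{d}$. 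That only affects the logarithm.)

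More fundamentally, with the best fixed \emph{state} as comparator, as in~\eqref{eq:regretStateLearning}, the target bound is false. Since $b_t=\mathcal{E}_t(\rho)$, the comparator $\rho$ has zero loss, so the regret equals the learner's cumulative loss. Take $K=2$, $T=n$, $E_{t,1}=\sum_{x\in\{0,1\}^n:\,x_t=1}\kb{x}$ and $E_{t,2}=\mathds{1}-E_{t,1}$. Your regularized FTL is deterministic, so the adversary can simulate it and fix in advance the string $x$ with $x_t=1$ iff $\tr(E_{t,1}\omega_t)\le 1/2$, then set $\rho=\kb{x}$. Every round costs $2\bigl(\tr(E_{t,1}\omega_t)-x_t\bigr)^2\ge 1/2$, so the regret is at least $T/2=n/2$. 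This rules out any bound $C\log T$ with $C$ independent of $n$, including your fallback for fixed $K$.

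The paper's proposition concerns a weaker benchmark. In the appendix the regret is $\sum_t\norm{x_t-b_t}_2^2-\min_{x\in\mathbb{R}^K}\sum_t\norm{x-b_t}_2^2$: the learner predicts vectors directly and competes with the best fixed \emph{vector}, not the best state. FTL is then the running mean $x_t=\frac{1}{t-1}\sum_{i<t}b_i$, and the argument is elementary:
\begin{itemize}
\item By be-the-leader, $\sum_t\ell_t(x_{t+1})\le\min_x\sum_t\ell_t(x)$, so the regret is at most $\sum_t\bigl(\ell_t(x_t)-\ell_t(x_{t+1})\bigr)=\sum_t\langle x_t-x_{t+1},\,x_t+x_{t+1}-2b_t\rangle\le 4B\sum_t\norm{x_{t+1}-x_t}_2$ (taking $\norm{x_1}_2\le B$).
\item The identity $x_{t+1}-x_t=(b_t-x_t)/t$ gives $\norm{x_{t+1}-x_t}_2\le 2B/t$.
\item Hence $\mathcal{R}_T\le 8B^2(1+\log T)$, with $B\le 1$ because each $b_t$ is a sub-probability vector.
\end{itemize}
No exp-concavity, regularizer, or log-det potential is involved. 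Neither $n$ nor $K$ can appear, because the decision set is never the state space. To prove the stated proposition, adopt this benchmark and this stability argument; your Steps 1 and 2 are correct but not needed.
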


This establishes logarithmic regret in the time horizon for the squared $L_2$ loss under $K$-outcome measurements which is independent of the number of the qubits or the number of possible measurement outcomes, demonstrating that additional structure in the loss function can lead to substantially improved guarantees.

\section{Online learning of quantum states}\label{sec:background} 

Consider an online interaction between a learner and an adversary that spans over $T$ number of discrete time steps. At each time step $t \in [T]$, learner reveals a quantum state $\omega_t$ to the adversary by providing its full classical description. The adversary responds by selecting a measurement operator $E_t$ and a loss function $\ell_t$, and subsequently revealing the full description of $E_t$ and $\ell_t$ to the learner. The learner experiences the loss $\ell_t(\Tr(E_t\omega_t))$ and updates its guess to $\omega_{t+1}$ for the next time step.  Here, $E_t$ is positive semidefinite satisfying $E_t \preccurlyeq \mathds{1}$ (with $\mathds{1}$ being the identity operator) and $\Tr$ is used to denote the operator trace. As usual in the online setting, the learner wishes to devise a sequence of guess estimates $\{\omega_t\}_{t \in [T]}$ that minimizes its overall regret given by:
\begin{equation}\label{eq:regretStateLearning}
    R_{T} = \max_{\ell_1\ldots, \ell_T} \Big( \sum_{t \in [T]} \ell_t\big(\Tr(E_t\omega_t)\big) - \min_{\omega \in \D(\mathbb{C}^{n})}\sum_{t \in [T]} \ell_t\big(\Tr(E_t\omega)\big)  \Big) \enspace,
\end{equation}
where $\D(\mathbb{C}^{n})$ is the set of all $n$-qubit quantum states of dimension $2^{n} \times 2^{n}$ and $\omega_t \in \D(\mathbb{C}^{n})$ for all $t \in [T]$.

If learner deploys algorithm $\A$, then learnability under $\A$ is equivalent to $\mathcal{R}_{T} \leq \gamma(\A;\mathcal{X}) T^{\beta}$, where $\gamma$ is a constant that depends on $\A$ and parameters associated with the objects in the decision set $\mathcal{X}$. Under a generic choice of loss functions by adversary, the regret parameter $\beta$ is usually optimal at $1/2$ while $\gamma(\A;\mathcal{X})$ is highly sensitive towards the choice of the learning algorithm, and in principle, we want this to grow slowly in terms of problem parameters. This requires that the parameter $\beta$ should be as small as possible (while definitely being strictly below unity) and $\gamma(\A;\mathcal{X})$ does not scale badly in the problem parameters (such as support size, norm, etc.)

\paragraph{Applications of online learning in quantum.}
Online learning of quantum states has a wide range of applications across quantum computing, chemistry, and materials science. In quantum computing, it enables real-time calibration, error mitigation, simulation of time-varying quantum channels and adaptive control of qubits, especially in noisy intermediate-scale quantum (NISQ) devices~\cite{chittoor23}. Online learning methods such as reinforcement learning and neural networks can optimize gate operations, learn feedback control policies, dynamically updating weights for quantum neural nets, and reconstruct quantum states or processes from streaming measurement data~\cite{fosel2021reinforcement,miao2024neural, gaikwad2023neural}. In quantum chemistry, these techniques are used to enhance variational algorithms for estimating molecular ground-state energies, allowing for more efficient simulations of chemical systems~\cite{ghosh2023deep}. In materials science, online learning can help design control protocols for synthesizing quantum materials or navigating complex phase spaces in many-body systems~\cite{rajak2020deep}. Overall, these methods aim to replace static, model-based procedures with adaptive, data-driven approaches that can continuously learn and improve from ongoing experiments or simulations.

In the next section, we deep dive into some of the technical analysis of the main results presented in~\Cref{sec:results}.

\section{Structured measurements and improved regret bounds}

We next adapt the Projected Online Gradient Descent (OGD) algorithm~\cite[Chapter 3]{hazan2016} to the problem of online learning of quantum states and analyze its regret in this setting. 

Recall that at each round $t \in [T]$, the learner selects a quantum state $\omega_t \in \mathrm{D}(\mathcal{X})$ and incurs the loss
\[
\ell_t(\tr(E_t \omega_t)),
\]
where the loss function $\ell_t$ and the measurement operator $E_t$ are chosen adversarially.
Here, $\mathrm{D}(\mathcal{X})$ denotes the set of quantum states on the Hilbert space $\mathcal{X}$, whose dimension scales exponentially with the number of qubits. Formally, 
\[
\mathrm{D}(\mathcal{X}) = \left\{ \omega \in \mathcal{L}(\mathcal{X}) \;\middle|\; \omega \succeq 0,\; \operatorname{tr}(\omega)=1 \right\},
\]
that is, the set of positive semidefinite operators $\omega \in \mathcal{L}(\mathcal{X})$ acting on $\mathcal{X}$ with unit trace, where $\mathcal{L}(\mathcal{X})$ denotes the space of linear operators on $\mathcal{X}$.

We analyze the following algorithm with $\Pi_{\textup{D}(\mathcal{X})}$ being the projection function that projects the input object onto the set $\textup{D}(\mathcal{X})$.

\begin{algorithm}[H]
\caption{OGD for online learning of quantum states}
\label{alg:ogd}
\begin{algorithmic}[1]
\State \textbf{Input:} $T$, $\mathcal{X}$, $\eta > 0$
\State Set $\omega_0 = \mathbb{1}/\operatorname{dim}(\mathcal{X})$.
\For{$t = 1, \ldots, T-1$}
    \State Consider the convex and $L$-Lipschitz loss function
    $\ell_t:\mathbb{R}\to\mathbb{R}$. Let $\ell_t'(x)$ be a
    (sub)derivative of $\ell_t$ with respect to $x$. Define
    \[
        \nabla_t \coloneqq
        \ell_t'(\operatorname{tr}(\omega_t E_t))\,E_t.
    \]
    \State Update the guess according to the online gradient descent rule:
    \[
        \omega_{t+1}
        \coloneqq
        \Pi_{\mathrm{D}(\mathcal{X})}
        \bigl(\omega_t-\eta\nabla_t\bigr).
    \] \label{eq:pogd:projection}
\EndFor
\end{algorithmic}
\end{algorithm}

\subsection{Computing the Learner's Update via OGD}

The update rule of the Projected Online Gradient Descent (OGD) algorithm requires, at each iteration, computing the projection of a Hermitian operator onto the set of quantum states with respect to the Frobenius norm; see~\Cref{eq:pogd:projection} in~\Cref{alg:ogd}. Concretely, given an intermediate iterate, one must evaluate its closest point in $\mathrm{D}(\mathcal{X})$ under the Hilbert–Schmidt geometry.

An explicit and computationally efficient procedure for evaluating the successive updates $\omega_t$ is provided in~\Cref{lemma:quadOverSimplex}. The derivation relies on the structural characterization stated in~\Cref{fact:FrobeniusProjection}, which reduces the projection onto $\mathrm{D}(\mathcal{X})$ to a quadratic optimization problem over the probability simplex.


\medskip

\begin{claim}\label{fact:FrobeniusProjection}
Let $\mathrm{Herm}(\mathcal{X})$ denote the set of Hermitian operators acting on the Hilbert space $\mathcal{X}$. 
Let $P \in \mathrm{Herm}(\mathcal{X})$ and let $\Pi_{\mathrm{D}(\mathcal{X})}(P)$ denote the projection of $P$ onto $\mathrm{D}(\mathcal{X})$ with respect to the Frobenius norm. Then
\[
\operatorname{supp}(P_{+}) = \operatorname{supp}\!\big(\Pi_{\mathrm{D}(\mathcal{X})}(P)\big),
\]
where $P_{+}$ denotes the positive part of $P$ in its spectral decomposition.
\end{claim}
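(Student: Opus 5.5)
The plan is to reduce the matrix projection to a vector projection in the eigenbasis of $P$, and then read off the support. Write the spectral decomposition $P=\sum_i \lambda_i \kb{u_i}$, so that $P_{+}=\sum_{i:\lambda_i>0}\lambda_i\kb{u_i}$ and $\operatorname{supp}(P_{+})=\operatorname{span}\{u_i:\lambda_i>0\}$.

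\textbf{Step 1 (common eigenbasis).} Let $\omega^\star=\Pi_{\mathrm{D}(\mathcal{X})}(P)$. Write $\omega^\star=\sum_j \mu_j\kb{v_j}$ with $\mu_j\ge 0$ and $\sum_j\mu_j=1$. Expanding the objective gives
\[
\norm{P-\omega^\star}_F^2=\norm{P}_F^2+\norm{\omega^\star}_F^2-2\tr(P\omega^\star).
\]
The first two terms are unitarily invariant. By von Neumann's trace inequality, $\tr(P\omega^\star)\le\sum_i\lambda_i^{\downarrow}\mu_i^{\downarrow}$, with equality when $\omega^\star$ is diagonal in the basis $\{u_i\}$ with its eigenvalues sorted in the same order as those of $P$. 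Since $\omega^\star$ is the unique minimiser (strict convexity of the squared Frobenius norm), it must have this form: $\omega^\star=\sum_i\mu_i\kb{u_i}$.

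\textbf{Step 2 (simplex projection and KKT).} After Step 1 the problem becomes minimising $\sum_i(\lambda_i-\mu_i)^2$ over the probability simplex. The KKT conditions give a threshold $\tau\in\mathbb{R}$ with $\mu_i=(\lambda_i-\tau)_+$ and $\sum_i(\lambda_i-\tau)_+=1$. This is exactly the quadratic-over-simplex reduction the paper announces for \Cref{lemma:quadOverSimplex}. Consequently
\[
\operatorname{supp}(\omega^\star)=\operatorname{span}\{u_i:\lambda_i>\tau\}.
\]

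\textbf{Step 3 (identifying the support with that of $P_{+}$).} The claimed equality now amounts to the index sets $\{i:\lambda_i>\tau\}$ and $\{i:\lambda_i>0\}$ coinciding. This needs two inclusions.
\begin{itemize}
\item If $\tau\ge 0$, then $\lambda_i>\tau$ forces $\lambda_i>0$, giving $\operatorname{supp}(\omega^\star)\subseteq\operatorname{supp}(P_{+})$.
\item For the reverse inclusion, one must show that no positive eigenvalue lies in $(0,\tau]$. This is the case exactly when the normalisation $\sum_i(\lambda_i-\tau)_+=1$ is met with the threshold effectively at the positive/nonpositive boundary.
\end{itemize}
I would try to close this by locating $\tau$ through the monotone function $g(s)=\sum_i(\lambda_i-s)_+$, comparing $g(0)=\tr(P_{+})$ with $1$. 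If $\tr(P_{+})=1$, then $\tau=0$ and the equality is immediate. Otherwise I would examine the sign of $\tau$ and the gap between consecutive eigenvalues around $0$.

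\textbf{Main obstacle.} I expect Step 3 to be the crux. The threshold is generally neither $0$ nor controlled by the sign pattern alone. For instance, $P=-\mathds{1}$ gives $\omega^\star=\mathds{1}/\dim(\mathcal{X})$ while $P_{+}=0$. So the argument will only go through by using whatever regime the paper intends for the OGD iterates $P=\omega_t-\eta\nabla_t$ (e.g.\ small $\eta$ so that $\tr(P_{+})$ stays near $1$ and the eigenvalues near $0$ are separated). I would make that regime explicit at this point. The robust content that the later \Cref{lemma:quadOverSimplex} needs — the shared eigenbasis and the form $\mu_i=(\lambda_i-\tau)_+$ — comes unconditionally from Steps 1--2.
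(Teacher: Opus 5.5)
\textbf{Gap (in the statement):} your Step 3 cannot be closed, because the claim itself is false. Your counterexample $P=-\mathds{1}$ is valid, and neither inclusion holds in general.
\begin{itemize}
\item $\operatorname{supp}(P_+)\subseteq\operatorname{supp}(\Pi_{\D(\mathcal{X})}(P))$ fails for the positive definite $P=\mathrm{diag}(2,\tfrac12)$. Here $\tau=1$ and $\Pi_{\D(\mathcal{X})}(P)=\mathrm{diag}(1,0)$.
\item $\operatorname{supp}(\Pi_{\D(\mathcal{X})}(P))\subseteq\operatorname{supp}(P_+)$ fails for $P=\mathrm{diag}(\tfrac{3}{10},-\tfrac{1}{10})$. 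Here $\tau=-\tfrac25$ and $\Pi_{\D(\mathcal{X})}(P)=\mathrm{diag}(\tfrac{7}{10},\tfrac{3}{10})$.
\end{itemize}
The paper's sketch does not avoid this. It uses the same expansion of $\norm{P-\Pi_{\D(\mathcal{X})}(P)}_F^2$ as you and argues by contradiction only against non-overlapping supports. Ruling out disjoint supports is far weaker than equality, and nothing in the sketch yields equality.

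\textbf{The small-$\eta$ rescue does not work.} At a rank-deficient iterate, the eigenvalues of $P=\omega_t-\eta\nabla_t$ near $0$ are of order $\eta$, the same order as $\tau$. Take $\omega_t=\kb{e_1}$, $E_t=\mathrm{diag}(0,1,\tfrac{1}{10})$ and $\ell_t(z)=-z$. Then $\nabla_t=-E_t$ and $P=\mathrm{diag}(1,\eta,\tfrac{\eta}{10})=P_+$. The threshold is $\tau=\tfrac{\eta}{2}$, so $\omega_{t+1}=\mathrm{diag}(1-\tfrac{\eta}{2},\tfrac{\eta}{2},0)$ for every $\eta\in(0,2)$, which has smaller support than $P_+$.

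\textbf{What is true} is what your Steps 1--2 prove rigorously: $\Pi_{\D(\mathcal{X})}(P)=\sum_i(\lambda_i-\tau)_+\kb{u_i}$. Hence $\operatorname{supp}(\Pi_{\D(\mathcal{X})}(P))=\operatorname{supp}\bigl((P-\tau\mathds{1})_+\bigr)$. This equals $\operatorname{supp}(P_+)$ exactly when no eigenvalue of $P$ lies in $(0,\tau]$ (for $\tau\ge 0$) or in $(\tau,0]$ (for $\tau<0$), for instance when $\tr(P_+)=1$.

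The shared eigenbasis is what \Cref{lemma:quadOverSimplex} actually needs; equality of supports alone would not give it. Your von Neumann trace inequality argument plus uniqueness is the correct form of the paper's expansion: the matrix $\bigl(|\langle\psi_i|\phi_j\rangle|^2\bigr)_{i,j}$ is doubly stochastic, so the cross term is maximised by aligning the sorted eigenvalues. You should state and prove this corrected claim in place of the original one.
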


\begin{proof}[Proof sketch]
    Let $\operatorname{supp}(P)$ and $\operatorname{supp}(\Pi_{\textup{D}(\mathcal{X})}(P))$ be non-overlapping with $|\operatorname{supp}(P)| = |\operatorname{supp}(\Pi_{\textup{D}(\mathcal{X})}(P))|$. As $P \in \textup{Herm}(\mathcal{X})$, we can represent it in its unique spectral decomposition as $P = \sum_i \lambda_i \kb{\psi_i}$ and similarly, $\Pi_{\textup{D}(\mathcal{X})}(P) = \sum_j \gamma_j \kb{\phi_j}$.

    Using the well-known fact that $\norm{X}_F^2 = \tr(X^{\dagger}X)$, we have
    \begin{equation}
    \small
    {    \norm{P - \Pi_{\textup{D}(\mathcal{X})}(P)}_{F}^2 = \sum_i \lambda_i^2 + \sum_j \gamma_j^2 -2\sum_{i,j} \lambda_i \gamma_j \lvert \bracket{\psi_i}{\phi_j} \rvert^2 \enspace.
    }
    \end{equation}
    As evident from the previous expression, the initial assumption that the supports are non-overlapping is clearly a contradiction since the overlapping supports would results in a strictly smaller norm. 

\end{proof}
    
\begin{lemma}\label{lemma:quadOverSimplex}
    Let $P \in \textup{Herm}(\mathcal{X})$ with its unique spectral decomposition given by $P = \sum_i \lambda_i \kb{\psi_i}$ and let
    \begin{equation}
        z^{*} = \argmin_{z \in \Delta_K} \norm{z - \Lambda}_2,
    \end{equation}
    where $\Lambda = \begin{bmatrix}
        \lambda_1 \\ \lambda_2 \\ \vdots \\ \lambda_K
    \end{bmatrix}$ and $\Delta_K$ is the $K$-dimensional probability simplex. Then
    \begin{equation}
        \Pi_{\textup{D}(\mathcal{X})}(P) = \sum_i z_i^{*} \kb{\psi_i} \enspace, 
    \end{equation}
    where $z_i^{*} = \max\{\lambda_i - \mu, 0\}$ with $\mu$ satisfying $\sum_i^{K} \max\{\lambda_i - \mu, 0\} = 1$.

\end{lemma}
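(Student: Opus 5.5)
We prove the statement in two stages: first show that the projection can be taken diagonal in the eigenbasis $\{\ket{\psi_i}\}$ of $P$, then solve the resulting vector problem over the simplex through its KKT conditions. Here $K=\dim(\mathcal{X})$, so $\Delta_K$ is exactly the set of spectra of density operators.

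\emph{Step 1 (reduction to eigenvalues).} For any $\omega\in\mathrm{D}(\mathcal{X})$ with eigenvalues $\gamma_1\ge\cdots\ge\gamma_K$, expand as in the sketch following \Cref{fact:FrobeniusProjection}:
\[
\norm{P-\omega}_F^2=\sum_i\lambda_i^2+\sum_j\gamma_j^2-2\tr(P\omega).
\]
Order $\lambda_1\ge\cdots\ge\lambda_K$ and apply von Neumann's trace inequality, $\tr(P\omega)\le\sum_i\lambda_i\gamma_i$. This gives
\[
\norm{P-\omega}_F^2\ \ge\ \norm{\Lambda-\gamma}_2^2,
\]
with equality when $\omega=\sum_i\gamma_i\kb{\psi_i}$. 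Consequently
\[
\min_{\omega\in\mathrm{D}(\mathcal{X})}\norm{P-\omega}_F^2=\min_{\gamma\in\Delta_K}\norm{\Lambda-\gamma}_2^2 .
\]
Strictly, the minimisation on the right is over sorted $\gamma$. The unconstrained simplex minimiser $z^*$ is automatically sorted in the same order as $\Lambda$ (this is visible from the closed form in Step 2), so the restriction costs nothing. Hence $\sum_i z_i^*\kb{\psi_i}$ attains the minimum.

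\emph{Uniqueness.} The Frobenius projection onto the closed convex set $\mathrm{D}(\mathcal{X})$ is unique, because $\norm{\cdot}_F^2$ is strictly convex. So this minimiser is $\Pi_{\mathrm{D}(\mathcal{X})}(P)$. This argument is consistent with \Cref{fact:FrobeniusProjection}, although we do not need to rely on that claim's sketch.

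\emph{Step 2 (simplex projection).} We solve $\min_{z}\tfrac12\norm{z-\Lambda}_2^2$ subject to $z\ge0$ and $\sum_i z_i=1$. The problem is strictly convex with affine constraints, so Slater's condition holds and the KKT conditions are necessary and sufficient. With a multiplier $\mu$ for the equality constraint and $\nu_i\ge0$ for the nonnegativity constraints, stationarity reads
\[
z_i-\lambda_i+\mu-\nu_i=0,
\]
together with complementary slackness $\nu_iz_i=0$. If $z_i>0$ then $\nu_i=0$ and $z_i=\lambda_i-\mu$. If $z_i=0$ then $\lambda_i-\mu=-\nu_i\le0$. Both cases combine to $z_i^*=\max\{\lambda_i-\mu,0\}$, and primal feasibility forces $\sum_i\max\{\lambda_i-\mu,0\}=1$.

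\emph{Existence and uniqueness of $\mu$.} The map $g(\mu)=\sum_i\max\{\lambda_i-\mu,0\}$ is continuous and nonincreasing. It is strictly decreasing wherever it is positive, it tends to $+\infty$ as $\mu\to-\infty$, and it vanishes for $\mu\ge\max_i\lambda_i$. Hence $g(\mu)=1$ has exactly one solution. The formula also shows that $z^*$ is monotone in $\Lambda$, which supplies the sortedness used in Step 1.

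\emph{Main obstacle.} The delicate step is Step 1, namely justifying that the optimal $\omega$ commutes with $P$. The sketch given for \Cref{fact:FrobeniusProjection} does not establish this rigorously. The cleanest tool is von Neumann's trace inequality, which reduces the matrix problem to a problem on eigenvalues. An alternative is unitary invariance combined with uniqueness: if $U$ commutes with $P$, then $U\Pi(P)U^\dagger$ is also a projection of $P$, hence equals $\Pi(P)$. However, this only yields block-diagonality when $P$ has repeated eigenvalues. For that reason the trace-inequality route is the one we will follow.
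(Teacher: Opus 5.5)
Your proof is correct, and it differs from the paper's in the step that carries the weight. Both arguments end the same way, with the KKT conditions for Euclidean projection onto $\Delta_K$. The difference is the reduction from the matrix problem to the eigenvalue problem. The paper gets there by citing \Cref{fact:FrobeniusProjection} and ``simplifying''. You use von Neumann's trace inequality, which amounts to the Hermitian Hoffman--Wielandt bound $\norm{P-\omega}_F\ge\norm{\Lambda^{\downarrow}-\gamma^{\downarrow}}_2$, together with uniqueness of the projection.

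Your route is the rigorous one, for two reasons. First, a statement about supports cannot force $\Pi_{\mathrm{D}(\mathcal{X})}(P)$ to be diagonal in the eigenbasis of $P$. Second, the Claim is false as stated. For $P=\mathrm{diag}(2,\tfrac12,0)$ one gets $\mu=1$ and $\Pi_{\mathrm{D}(\mathcal{X})}(P)=\mathrm{diag}(1,0,0)$, whose support is strictly smaller than that of $P_+$. Your own formula gives $\operatorname{supp}\Pi_{\mathrm{D}(\mathcal{X})}(P)=\operatorname{span}\{\ket{\psi_i}:\lambda_i>\mu\}$, so you should drop the remark that your argument is ``consistent with'' the Claim. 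You also prove existence and uniqueness of $\mu$ and of the projection, which the paper leaves implicit.

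Two simplifications are available to you.
\begin{itemize}
\item The sortedness of $z^*$ is not needed. Taking $\omega=\sum_i\gamma_i\kb{\psi_i}$ gives $\norm{P-\omega}_F=\norm{\Lambda-\gamma}_2$ for every $\gamma\in\Delta_K$, sorted or not, so the two minima are sandwiched directly.
\item The unitary-invariance route you set aside works in full. Every $V=\sum_i e^{i\theta_i}\kb{\psi_i}$ commutes with $P$, even when eigenvalues repeat. Uniqueness then forces $\Pi_{\mathrm{D}(\mathcal{X})}(P)$ to commute with all such $V$, hence to be diagonal in $\{\ket{\psi_i}\}$. This argument is more elementary than the trace inequality.
\end{itemize}
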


\begin{proof}
    With the help of~\Cref{fact:FrobeniusProjection}, we simplify the expression $\norm{P - \Pi_{\textup{D}(\mathcal{X})}(P)}_{F}^2$ while using the fact that $\norm{X}_F^2 = \tr(X^{\dagger}X)$. 
    We get that the eigenvalues of the projection $\Pi_{\textup{D}(\mathcal{X})}(P)$ is simply given by finding the Euclidean projection
    \begin{equation}
        z^{*} = \argmin_{z \in \Delta} \norm{z - \Lambda}_2 \enspace.
    \end{equation}
    As this is a convex optimization problem, on simplifying its resultant KKT conditions, we get that $z_i^{*} = \max\{\lambda_i - \mu, 0\}$ with $\mu$ satisfying $\sum_i^{K} \max\{\lambda_i - \mu, 0\} = 1$.
    
\end{proof}
After completing our analysis, we became aware that an independent and comprehensive treatment of Frobenius-norm projection onto the set of quantum states was previously studied via an alternate approach in Section 3.2 of~\cite{gonccalves2016projected}.

\subsection{A Tight Regret Analysis for OGD}

We now present a refined regret analysis of the Projected Online Gradient
Descent (OGD) algorithm in our setting which also significantly tightens the analysis given in~\cite{yang2020revisiting}.
We begin by recalling a standard fact relating Lipschitz continuity and
bounded gradients, stated here in a slightly generalized form for
completeness.

\medskip

\begin{proposition}\label{prop:lipschitz_gradient_bound}
Let $S$ be an open subset of a normed vector space $(\mathcal{V},\|\cdot\|)$,
and let $f : S \to \mathbb{R}$ be a convex and $L$-Lipschitz function on $S$,
that is,
\[
|f(X) - f(Y)| \le L \|X-Y\|
\quad \text{for all } X,Y \in S.
\]
Then, for every $X \in S$ at which $f$ is differentiable,
\[
\|\nabla f(X)\|_* \le L,
\]
where $\|\cdot\|_*$ denotes the dual norm.
\end{proposition}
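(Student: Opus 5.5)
We will prove the bound directly from the definition of the dual norm, $\|G\|_* = \sup_{\|H\|\le 1} \langle G, H\rangle$, where $\langle\cdot,\cdot\rangle$ is the pairing between $\mathcal{V}^*$ and $\mathcal{V}$. We identify $\nabla f(X)$ with the (Fréchet) derivative of $f$ at $X$, viewed as a linear functional on $\mathcal{V}$. The plan is to bound $\langle \nabla f(X), H\rangle$ by $L\|H\|$ for every direction $H$ and then take the supremum over the unit ball.

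Fix $X \in S$ at which $f$ is differentiable, and fix any $H \in \mathcal{V}$ with $\|H\| \le 1$. Since $S$ is open, there is $\delta > 0$ such that $X + sH \in S$ for all $0 < s < \delta$. By differentiability,
\[
\langle \nabla f(X), H\rangle = \lim_{s \downarrow 0} \frac{f(X+sH) - f(X)}{s}.
\]
The Lipschitz hypothesis gives, for each such $s$,
\[
\frac{f(X+sH)-f(X)}{s} \le \frac{L\|sH\|}{s} = L\|H\| \le L.
\]
Passing to the limit yields $\langle \nabla f(X), H\rangle \le L$. Taking the supremum over all $H$ with $\|H\|\le 1$ gives $\|\nabla f(X)\|_* \le L$.

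Convexity is not actually needed for this argument. It only matters if one wants the statement for subgradients at nondifferentiable points. In that case we would use the subgradient inequality $f(X+sH) \ge f(X) + s\langle G, H\rangle$, combine it with the Lipschitz upper bound $f(X+sH)-f(X) \le sL\|H\|$, and obtain $\langle G,H\rangle \le L$ for every subgradient $G$. We would note this as a remark, since the paper later applies the proposition with subderivatives $\ell_t'$.

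The only delicate point is the bookkeeping of the setting. In the intended application $\mathcal{V}$ is $\mathrm{Herm}(\mathcal{X})$ with the Frobenius norm and the pairing $\langle G,H\rangle = \tr(G^\dagger H)$, so the dual norm is again the Frobenius norm. The openness of $S$ is what guarantees that the directional difference quotients are defined, so we would state explicitly where it is used.
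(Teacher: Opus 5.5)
Your proof is correct, but it takes a different route from the paper.

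The paper does not use a limit. It fixes a ball of radius $\alpha$ around $X$ inside $S$ and applies the first-order convexity inequality $f(X+Z)\ge f(X)+\langle\nabla f(X),Z\rangle$. This gives the non-asymptotic bound $\langle\nabla f(X),Z\rangle\le f(X+Z)-f(X)\le L\|Z\|$ for all $\|Z\|\le\alpha$. It then takes the supremum over the ball and divides by $\alpha$. The paper writes this step with absolute values, $|\langle\nabla f(X),Z\rangle|\le|f(X+Z)-f(X)|$, which convexity alone does not give (take $f(x)=x^2$, $X=1$, $Z=-1$). Only the one-sided bound is needed, though, because the ball is symmetric, so the conclusion is unaffected.

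What each approach buys:
\begin{itemize}
\item Your difference-quotient argument shows that convexity is superfluous at points of differentiability. It only uses that directional derivatives are given by a linear functional, so Gateaux differentiability suffices.
\item The paper's convexity argument carries over verbatim to arbitrary subgradients at nondifferentiable points. This is exactly your closing remark, and it is the case the paper actually needs.
\end{itemize}

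One small correction to your bookkeeping. In the proof of the regret theorem, the proposition is invoked with $\mathcal{V}=\mathbb{R}$, to bound the subderivative $|\ell_t'|\le L$ of the scalar loss. It is not invoked on $\mathrm{Herm}(\mathcal{X})$ with the Frobenius norm. The Frobenius bound $\|\nabla_t\|_F\le LB$ then follows separately from $\|E_t\|_F\le B$.
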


\begin{proof}
Fix $X \in S$. Since $S$ is open, there exists $\alpha > 0$ such that
$\{Y : \|Y-X\|\le \alpha\} \subseteq S$.
For any $Z$ with $\|Z\|\le \alpha$, convexity implies
\[
f(X+Z) \ge f(X) + \langle \nabla f(X), Z\rangle .
\]
Using the Lipschitz property,
\[
|\langle \nabla f(X), Z\rangle|
\le |f(X+Z)-f(X)|
\le L\|Z\|.
\]
Taking the supremum over all $\|Z\|\le \alpha$ and using the definition
of the dual norm yields
\[
\alpha \|\nabla f(X)\|_* 
= \max_{\|Z\|\le \alpha} \langle \nabla f(X), Z\rangle
\le \alpha L,
\]
which implies $\|\nabla f(X)\|_* \le L$.
\end{proof}

\medskip

We next record the standard OGD one-step progress inequality specialized
to our setting.

\begin{claim}\label{fact:ogdBounds}
Let $\ell_t:\mathbb{R}\to\mathbb{R}$ be convex and $L$-Lipschitz.
Let $\ell_t'(x)$ be a subderivative of $\ell_t$ at $x$, and define
\[
\nabla_t \coloneqq \ell_t'\!\big(\operatorname{tr}(E_t \omega_t)\big)\,E_t.
\]
Then, for every $U \in \mathrm{D}(\mathcal{X})$,
\begin{align*}
\hspace{-0.4cm}
\eta\!\left(\ell_t(\operatorname{tr}(E_t \omega_t))
     - \ell_t(\operatorname{tr}(E_t U))\right)
&\le \eta\, \operatorname{tr}\!\big(\nabla_t(\omega_t-U)\big) \\
&\le \tfrac12 \|\omega_t-U\|_F^2 - \tfrac12 \|\omega_{t+1}-U\|_F^2 + \tfrac{\eta^2}{2}\|\nabla_t\|_F^2 .
\end{align*}
\end{claim}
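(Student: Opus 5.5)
The claim has two inequalities, and I would prove them separately: the first is convexity of the composed scalar loss, the second is the standard nonexpansiveness argument for Euclidean projection, applied in the Hilbert–Schmidt inner product $\langle X,Y\rangle = \operatorname{tr}(X^\dagger Y)$ on $\mathrm{Herm}(\mathcal{X})$.

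\textbf{First inequality.} Since $\ell_t$ is convex on $\mathbb{R}$ and $\ell_t'(x)$ is a subderivative at $x=\operatorname{tr}(E_t\omega_t)$, we have $\ell_t(y)\ge \ell_t(x)+\ell_t'(x)(y-x)$ for all $y$. I would take $y=\operatorname{tr}(E_tU)$ and use linearity of the trace:
\[
y-x=\operatorname{tr}\big(E_t(U-\omega_t)\big).
\]
This gives
\[
\ell_t(\operatorname{tr}(E_t\omega_t))-\ell_t(\operatorname{tr}(E_tU))\le \ell_t'(\operatorname{tr}(E_t\omega_t))\operatorname{tr}\big(E_t(\omega_t-U)\big)=\operatorname{tr}\big(\nabla_t(\omega_t-U)\big).
\]
Multiplying by $\eta>0$ yields the first inequality. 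Note that $\nabla_t$ is Hermitian, since it is a real multiple of the Hermitian $E_t$, so the trace is real.

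\textbf{Second inequality.} Set $Y=\omega_t-\eta\nabla_t$, so that $\omega_{t+1}=\Pi_{\mathrm{D}(\mathcal{X})}(Y)$. The set $\mathrm{D}(\mathcal{X})$ is a closed convex subset of the real Hilbert space $\mathrm{Herm}(\mathcal{X})$ with the Frobenius inner product. Hence the Pythagorean property of Euclidean projection applies: for every $U\in\mathrm{D}(\mathcal{X})$,
\[
\|\omega_{t+1}-U\|_F\le\|Y-U\|_F.
\]
I would justify this from the variational characterization $\langle Y-\omega_{t+1},U-\omega_{t+1}\rangle\le 0$ by expanding $\|Y-U\|_F^2$. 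Then expand
\[
\|Y-U\|_F^2=\|\omega_t-U\|_F^2-2\eta\operatorname{tr}\big(\nabla_t(\omega_t-U)\big)+\eta^2\|\nabla_t\|_F^2,
\]
using $\operatorname{tr}(\nabla_t^\dagger(\omega_t-U))=\operatorname{tr}(\nabla_t(\omega_t-U))$ because both operators are Hermitian. Combining the two displays and rearranging gives
\[
\eta\operatorname{tr}\big(\nabla_t(\omega_t-U)\big)\le\tfrac12\|\omega_t-U\|_F^2-\tfrac12\|\omega_{t+1}-U\|_F^2+\tfrac{\eta^2}{2}\|\nabla_t\|_F^2,
\]
which is the claim.

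The only step needing care is the projection property. I would make explicit that it holds for any closed convex set in a real inner-product space, and that $\mathrm{Herm}(\mathcal{X})$ with the trace inner product is such a space. The explicit form of the projection from \Cref{lemma:quadOverSimplex} is not needed; only existence and uniqueness of the projection (by closedness and convexity) and its variational inequality are used.
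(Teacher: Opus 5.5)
Your proposal is correct and follows essentially the same route as the paper: you obtain $\|\omega_{t+1}-U\|_F\le\|\omega_t-\eta\nabla_t-U\|_F$ from the Frobenius projection onto the closed convex set $\mathrm{D}(\mathcal{X})$, then expand the square and rearrange (the paper cites non-expansiveness of the projection, while you derive the same inequality from the variational characterization). You also prove the first inequality explicitly, via the subgradient inequality for $\ell_t$ and linearity of the trace, which the paper's proof leaves implicit.
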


\begin{proof}
By the OGD update rule,
\[
\omega_{t+1}
= \Pi_{\mathrm{D}(\mathcal{X})}
  \big(\omega_t - \eta \nabla_t\big).
\]
Non-expansiveness of Euclidean projection implies
\[
\|\omega_{t+1}-U\|_F^2
\le \|\omega_t - \eta\nabla_t - U\|_F^2 .
\]
Expanding the right-hand side and rearranging yields
\[
\|\omega_{t+1}-U\|_F^2
\le \|\omega_t-U\|_F^2
+ \eta^2\|\nabla_t\|_F^2
-2\eta\,\operatorname{tr}\!\big(\nabla_t(\omega_t-U)\big),
\]
which gives the claimed inequality.
\end{proof}

\medskip

We are now ready to state the main regret guarantee.

\smallskip

\begin{theorem}\label{theorem:PGDRegret}
Suppose the loss at each round $t \in [T]$ is of the form
$\ell_t(\operatorname{tr}(E_t \omega_t))$, where
$\ell_t : \mathbb{R} \to \mathbb{R}$ is convex and $L$-Lipschitz.
Assume further that $\|E_t\|_F \le B$ for all $t \in [T]$.
Then the regret of OGD satisfies
\[
R_T \le D L B\sqrt{T},
\]
where $D$ denotes the diameter of $\mathrm{D}(\mathcal{X})$
with respect to the Frobenius norm.
\end{theorem}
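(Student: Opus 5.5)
The plan is the standard telescoping argument built on \Cref{fact:ogdBounds}, with the step size tuned at the end. Fix an arbitrary comparator $U \in \mathrm{D}(\mathcal{X})$; it will later be taken to be the minimizer in~\eqref{eq:regretStateLearning}. For each round, \Cref{fact:ogdBounds} gives
\[
\ell_t(\operatorname{tr}(E_t\omega_t)) - \ell_t(\operatorname{tr}(E_t U)) \le \frac{1}{2\eta}\Big(\|\omega_t-U\|_F^2 - \|\omega_{t+1}-U\|_F^2\Big) + \frac{\eta}{2}\|\nabla_t\|_F^2 .
\]

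First I sum this over $t \in [T]$. The distance terms telescope to $\frac{1}{2\eta}\big(\|\omega_1-U\|_F^2 - \|\omega_{T+1}-U\|_F^2\big)$. Dropping the nonpositive last term and using that $\omega_1$ and $U$ both lie in $\mathrm{D}(\mathcal{X})$, this is at most $D^2/(2\eta)$. The indexing mismatch in \Cref{alg:ogd}, which starts at $\omega_0$ and loops to $T-1$, is harmless: it shifts only one round, and I will index so that every played $\omega_t$ is a feasible state.

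Next I bound the gradient term. Since $\nabla_t = \ell_t'(\operatorname{tr}(E_t\omega_t))\,E_t$, we have $\|\nabla_t\|_F = |\ell_t'(\cdot)|\,\|E_t\|_F$. By \Cref{prop:lipschitz_gradient_bound} applied to the one-dimensional convex $L$-Lipschitz function $\ell_t$ on $\mathbb{R}$, the subderivative satisfies $|\ell_t'| \le L$. For a subgradient at a possible kink, the same argument with the subgradient inequality in place of differentiability works verbatim. Hence $\|\nabla_t\|_F \le LB$, and
\[
\sum_t \Big(\ell_t(\operatorname{tr}(E_t\omega_t)) - \ell_t(\operatorname{tr}(E_t U))\Big) \le \frac{D^2}{2\eta} + \frac{\eta T L^2 B^2}{2}.
\]

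Finally I choose $\eta = D/(LB\sqrt{T})$, which balances the two terms and gives the bound $DLB\sqrt{T}$. Since $U$ and the losses were arbitrary, taking the minimum over $U$ and the maximum over $\ell_1,\dots,\ell_T$ yields $R_T \le DLB\sqrt{T}$.

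The only real subtlety is the constant. The bound is clean because the diameter enters as $D^2/(2\eta)$ and not as $2D^2/\eta$ or similar, so I must make sure to use $\|\omega_1-U\|_F \le D$ directly rather than a triangle-inequality bound. Optionally, I can note that $D \le \sqrt{2}$, since for states $\|\rho-\sigma\|_F^2 \le \tr\rho^2 + \tr\sigma^2 \le 2$. This gives the informal $\mathcal{O}(LB\sqrt{T})$ of \Cref{thm:regretBoundviaOGD}.
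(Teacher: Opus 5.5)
Your proposal is correct and follows essentially the same route as the paper: you sum the one-step inequality of \Cref{fact:ogdBounds}, telescope, bound $\|\omega_1-U\|_F\le D$ and $\|\nabla_t\|_F\le LB$ via \Cref{prop:lipschitz_gradient_bound}, and set $\eta = D/(LB\sqrt{T})$. Your closing remark that $D\le\sqrt{2}$ (which uses $\tr(\rho\sigma)\ge 0$ for states) is valid, and it is in fact sharper than the paper's stated bound $D\le 2$.
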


\begin{proof}
Fix $U \in \mathrm{D}(\mathcal{X})$.
Summing the inequality in \Cref{fact:ogdBounds} over $t=1,\dots,T$
and telescoping gives
\begin{align*}
R_T(U)
&\le \frac{1}{2\eta}\|\omega_1-U\|_F^2
 + \frac{\eta}{2}\sum_{t=1}^T \|\nabla_t\|_F^2 .
\end{align*}
Since $\mathrm{D}(\mathcal{X})$ has Frobenius diameter at most $D$,
we have $\|\omega_1-U\|_F \le D$.
Moreover, by \Cref{prop:lipschitz_gradient_bound},
$|\ell_t'|\le L$, and thus
\[
\|\nabla_t\|_F
= |\ell_t'(\operatorname{tr}(E_t \omega_t))|
  \,\|E_t\|_F
\le L B .
\]
Therefore,
\[
R_T(U)
\le \frac{D^2}{2\eta}
 + \frac{\eta T}{2} L^2 B^2 .
\]
Optimizing by setting $\eta = \frac{D}{LB\sqrt{T}}$
yields
\[
R_T \le D L B \sqrt{T}.
\]

Finally, note that this bound is independent of the number of qubits.
Since the Frobenius diameter of the set of density operators is at most
$D=2$, the bound simplifies to $R_T \le 2 L B \sqrt{T}$.
\end{proof}

\section{Discussion}

\paragraph{Lower bounds.}
For the OGD result, the $\mathcal{O}(\sqrt{T})$ dependence is unavoidable for general adversarial convex Lipschitz losses. This follows by embedding one-dimensional online linear optimization into our setting: even for a single qubit, rank-one effects, and linear losses $\ell_t(z)=\sigma_t Lz$, the learner faces the standard online linear optimization problem on an interval, whose minimax regret is $\Omega(\sqrt{T})$. Thus our bound $R_T \le DLB\sqrt{T}$ is optimal in its dependence on $T$ for this general loss class.

We will also clarify the role of the Frobenius/rank parameter. While we do not claim a fully matching minimax regret lower bound in $B$ for the full-information online setting, existing lower bounds for batch shadow tomography show that this parameter is statistically meaningful. In particular, \cite{huang2020predicting} show that any feature-independent single-copy measurement procedure for predicting $M$ observables $0\preceq O_i\preceq I$ with $\max_i \|O_i\|_F^2\le S$ to accuracy $\epsilon$ requires $N\ge \Omega(S\log(M)/\epsilon^2)$ copies. For rank-$r$ projector effects, this gives $N\ge \Omega(r\log(M)/\epsilon^2)$. This supports the relevance of the Frobenius/low-rank parameter. But it is a batch sample-complexity lower bound rather than a direct online regret lower bound. It would be interesting in future works to try extending it to the online setting by considering the error bound used in~\cite{aaronsononline}. It provides some interesting insight, though no matching lower bound due to different bound types. We believe a matching lower regret bound taking the measurement types into account can be left for future work.

\section{Acknowledgment}

AB thanks Jamie Sikora, Sarvagya Upadhyay, Sumeet Khatri, and Ivona Brandi\'c for insightful discussions. 
AB's research is funded by BMIMI, BMWET, the State of Upper Austria, and the State of Tyrol within the COMET module Quantum Algorithm Engineering (FFG grant no. 923923) managed by Austrian Research Promotion Agency FFG.
This research was initiated during AB's visit to Fujitsu Research of America.

\section{LLM Disclosure}
We used LLM tools to polish the language and edit the writing in preparation of this draft.

\newpage

\bibliographystyle{alpha}
\bibliography{references}

\appendix

\section{Multi-Outcome Measurements and Squared $L_2$ Loss}\label{apd:structuredLoss}

We now turn to the online setting of shadow tomography, whose objective is to sequentially predict measurement statistics of an unknown quantum state. The online shadow tomography problem for two-outcome measurements was studied in full generality in~\cite{aaronsononline}. Here, we consider the multi-outcome setting under a specific quadratic loss and show that this additional structure yields substantially improved regret guarantees.

\subsection*{Problem Setting}

We consider a two-player interaction between a learner and an adversary over $T$ rounds. An unknown quantum state $\rho \in \mathrm{D}(\mathcal{X})$ is fixed but hidden from the learner.

At each round $t \in [T]$, the adversary selects a $K$-outcome measurement
\[
(E_{t,1}, E_{t,2}, \ldots, E_{t,K}),
\]
where each $E_{t,i} \succeq 0$ and $\sum_{i=1}^K E_{t,i} \preceq \mathbb{1}$.  
The learner outputs a prediction vector $x_t \in \mathbb{R}^K$ intended to approximate the vector of true outcome probabilities
\[
b_t
=
\begin{bmatrix}
\operatorname{tr}(E_{t,1}\rho) \\
\operatorname{tr}(E_{t,2}\rho) \\
\vdots \\
\operatorname{tr}(E_{t,K}\rho)
\end{bmatrix}.
\]
The learner incurs the quadratic loss
\[
\ell_t(x_t) = \|x_t - b_t\|_2^2,
\]
after which the vector $b_t$ is revealed.

The regret with respect to a fixed comparator $x \in \mathbb{R}^K$ is
\[
\mathcal{R}_T
=
\sum_{t=1}^T \|x_t - b_t\|_2^2
-
\min_{x \in \mathbb{R}^K}
\sum_{t=1}^T \|x - b_t\|_2^2.
\]

Our objective is to determine whether the additional structure of the squared $L_2$ loss permits regret bounds that improve upon the $\mathcal{O}(\sqrt{T})$ rates typical in general convex Lipschitz settings.

\subsection*{Algorithm}

We consider the simple averaging strategy
\[
x_t = \frac{1}{t-1} \sum_{i=1}^{t-1} b_i,
\qquad t \ge 2,
\]
with $x_1$ arbitrary.  
This algorithm coincides with the Follow-the-Leader strategy for quadratic losses.

\subsection*{Regret Analysis}

We invoke the following standard result (see Lemma 1.2 in~\cite{orabona2023modernintroductiononlinelearning}).

\begin{lemma}\label{lemma:lossesWithAdpativeDecision}
Let
\[
y_t = \argmin_{x} \sum_{i=1}^{t-1} \ell_i(x).
\]
Then
\[
\sum_{t=1}^T \ell_t(y_t)
\le
\sum_{t=1}^T \ell_t(x_T).
\]
\end{lemma}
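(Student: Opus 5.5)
The plan is an induction on $T$. This is the standard ``be-the-leader'' telescoping argument of~\cite{orabona2023modernintroductiononlinelearning}, specialised to the indexing used in the statement. The argument rests on one property of a leader: a minimizer of a partial sum of losses does at least as well on that partial sum as any other point. Throughout I read $x_T$ as the leader after the horizon, i.e.\ a minimizer of the cumulative loss $\sum_{i=1}^{T}\ell_i$. With that reading, the right-hand side $\sum_t \ell_t(x_T)$ equals $\min_x \sum_t \ell_t(x)$, which is how the lemma is used for the averaging strategy above.

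For the base case $T=1$, the claim reduces to comparing $\ell_1$ at the initial leader with $\ell_1$ at its own minimizer.

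For the inductive step, I assume the inequality for horizon $T-1$ and add $\ell_T(y_T)$ to both sides. I then try to bound the new right-hand side $\sum_{t=1}^{T-1}\ell_t(x_{T-1}) + \ell_T(y_T)$ by $\sum_{t=1}^{T}\ell_t(x_T)$. Minimality of $x_{T-1}$ for the first $T-1$ losses gives $\sum_{t\le T-1}\ell_t(x_{T-1}) \le \sum_{t\le T-1}\ell_t(x_T)$. What remains is to compare $\ell_T(y_T)$ with $\ell_T(x_T)$.

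That comparison is where I expect the real obstacle. Under the literal indexing, $y_T$ minimizes only $\sum_{i\le T-1}\ell_i$, so it does not ``see'' $\ell_T$. The telescoping closes only if the point at which $\ell_t$ is evaluated is a leader that already includes $\ell_t$, i.e.\ $y_{t+1}$ in the notation of the statement.

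So the key step I would carry out first is to check carefully which index convention makes the chain
\[
\sum_{t\le T}\ell_t(y_{t+1}) \le \sum_{t\le T-1}\ell_t(y_T)+\ell_T(y_{T+1}) \le \sum_{t\le T}\ell_t(y_{T+1})
\]
valid. I would then verify whether the statement as worded, with the loss at round $t$ evaluated at $y_t$, follows from it. If it does not, the literal form would need an additional term $\sum_t\bigl(\ell_t(y_t)-\ell_t(y_{t+1})\bigr)$ on the right. That additional term is exactly the stability term that the subsequent regret analysis for the squared $L_2$ loss must control.
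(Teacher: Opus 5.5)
The gap is that you stop at ``if it does not''. The inequality as stated is false, and your own base case already shows this. With the literal indexing, $T=1$ asks for $\ell_1(y_1)\le\ell_1(x_1)=\min_x\ell_1(x)$. Here $y_1$ is an argmin over the empty sum, i.e.\ an arbitrary point. The inequality therefore runs the wrong way unless $y_1$ happens to minimize $\ell_1$, so the induction cannot start.

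A counterexample that does not depend on $y_1$, and that fits the appendix's squared-loss setting ($K=1$, $b_1=b_3=0$, $b_2=1$), is $\ell_1(x)=\ell_3(x)=x^2$, $\ell_2(x)=(x-1)^2$. Then $y_2=0$ and $y_3=\tfrac12$, so
\[
\sum_{t=1}^{3}\ell_t(y_t)\;\ge\;\ell_2(0)+\ell_3(\tfrac12)=\tfrac54 .
\]
The right-hand side equals $\tfrac23$ if $x_3$ is the minimizer of $\sum_{t\le 3}\ell_t$ (your reading). It equals $\tfrac34$ if $x_3$ is the appendix's averaging iterate $\tfrac12(b_1+b_2)=y_3$. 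Under your reading the statement would say that Follow-the-Leader never has positive regret, which this example refutes.

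What you should do is commit to the diagnosis you already made. Your shifted chain is a complete proof of the correct statement:
\begin{itemize}
\item the base case is the equality $\ell_1(y_2)=\ell_1(y_2)$;
\item the first inequality is the inductive hypothesis with $\ell_T(y_{T+1})$ added;
\item the second is minimality of $y_T$ for $\sum_{t\le T-1}\ell_t$.
\end{itemize}
Hence $\sum_{t\le T}\ell_t(y_{t+1})\le\sum_{t\le T}\ell_t(y_{T+1})=\min_x\sum_{t\le T}\ell_t(x)$, and so $\sum_t\ell_t(y_t)-\min_x\sum_t\ell_t(x)\le\sum_t\bigl(\ell_t(y_t)-\ell_t(y_{t+1})\bigr)$.

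This be-the-leader inequality is the standard result the paper cites; the paper gives no proof of its own. It is also what the subsequent regret computation actually uses. There, $y_t$ silently denotes the average of $b_1,\dots,b_t$: the text writes $x_t=y_{t-1}$ and $y_t-y_{t-1}=\tfrac1t(b_t-y_{t-1})$. That is your $y_{t+1}$, and the bound $\mathcal{R}_T\le\sum_t\bigl(\norm{x_t-b_t}_2^2-\norm{y_t-b_t}_2^2\bigr)$ is exactly your stability term.

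So your write-up should state that the lemma is false as written, give a counterexample, and prove the reindexed version, rather than leaving the question open.
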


For the quadratic loss, it is straightforward to verify that
\[
y_t = \frac{1}{t-1} \sum_{i=1}^{t-1} b_i,
\]
so that $x_t = y_t$.

We assume that $\|b_t\|_2 \le B$ for all $t \in [T]$, which holds whenever the measurement outcome probabilities are uniformly bounded.

Expanding the regret,
\begin{align*}
\mathcal{R}_T
&=
\sum_{t=1}^T
\big(
\|x_t - b_t\|_2^2
-
\|y_t - b_t\|_2^2
\big) \\
&=
\sum_{t=1}^T
\big(
\|x_t\|_2^2 - \|y_t\|_2^2
+
2 \langle b_t, y_t - x_t \rangle
\big).
\end{align*}

Using the identity $x_t = y_{t-1}$ and standard norm inequalities,
\[
\|x_t\|_2, \|y_t\|_2 \le B,
\]
together with the triangle and Cauchy–Schwarz inequalities, we obtain
\[
\mathcal{R}_T
\le
4B
\sum_{t=1}^T
\|y_t - y_{t-1}\|_2.
\]

A direct computation shows
\[
y_t - y_{t-1}
=
\frac{1}{t}
\left(
b_t - y_{t-1}
\right),
\]
so that
\[
\|y_t - y_{t-1}\|_2
\le
\frac{2B}{t}.
\]

Therefore,
\[
\mathcal{R}_T
\le
8B^2
\sum_{t=1}^T \frac{1}{t}
\le
8B^2 (1 + \log(T)),
\]
where we used the standard bound on the harmonic series.

\medskip

We summarize the result below.

\begin{unnumprop}
Under the squared $L_2$ loss in the $K$-outcome measurement setting, the averaging (Follow-the-Leader) algorithm achieves regret
\[
\mathcal{R}_T = \mathcal{O}(\log(T)).
\]
In particular, the regret grows only logarithmically with the time horizon and is independent of the number of qubits and the number of measurement outcomes.
\end{unnumprop}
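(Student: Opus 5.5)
The plan is to prove the unnumbered proposition with a direct Follow-the-Leader argument for the averaging strategy. The key structural input is that the quadratic loss is strongly convex, so the FTL iterates move by only $\mathcal{O}(1/t)$ per round. The whole proof then reduces to a harmonic sum. Throughout I use $\|b_t\|_2 \le B$, which always holds with $B\le 1$: since $\sum_i E_{t,i}\preceq \mathbb{1}$, the entries of $b_t$ are nonnegative and sum to at most $1$, so $\|b_t\|_2\le\|b_t\|_1\le 1$. This is what makes the final bound independent of both $n$ and $K$.

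\textbf{Step 1 (be-the-leader).} Write $y_t=\frac1t\sum_{i\le t} b_i$ for the leader that includes round $t$, so that the prediction is $x_t=y_{t-1}$ for $t\ge2$. The appropriately indexed form of \Cref{lemma:lossesWithAdpativeDecision} says that playing $y_t$ at round $t$ does at least as well as the best fixed comparator, namely $y_T=\argmin_x\sum_{t\le T}\|x-b_t\|_2^2$. I would prove this by the usual induction on $T$. Consequently
\[
\mathcal{R}_T\le\sum_{t=1}^T\big(\ell_t(x_t)-\ell_t(y_t)\big).
\]

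\textbf{Step 2 (per-round difference).} For each round I bound the loss gap by the movement of the leader:
\[
\ell_t(x_t)-\ell_t(y_t)=\langle x_t-y_t,\,x_t+y_t-2b_t\rangle\le \|x_t-y_t\|_2\cdot 4B.
\]
Here $x_t$ and $y_t$ are averages of vectors of norm at most $B$, so $\|x_t\|_2,\|y_t\|_2\le B$. For round $1$ I choose $x_1$ in the same region, for instance $x_1=0$, so that its contribution is $\mathcal{O}(B^2)$.

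\textbf{Step 3 (stability of averages).} A direct computation gives $y_t-y_{t-1}=\frac1t(b_t-y_{t-1})$, hence $\|x_t-y_t\|_2\le 2B/t$. Summing the per-round bounds yields
\[
\mathcal{R}_T\le 8B^2\sum_{t=1}^T\frac1t\le 8B^2(1+\log T)=\mathcal{O}(\log T).
\]
With $B\le1$ this is independent of the number of qubits and of $K$.

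The main obstacle is Step 1, specifically the indexing. As stated, \Cref{lemma:lossesWithAdpativeDecision} uses the leader over the first $t-1$ rounds, which is exactly the FTL prediction and not the be-the-leader point. I will restate it in its standard be-the-leader form and verify the induction $\sum_{t\le T}\ell_t(y_t)\le\sum_{t\le T}\ell_t(y_T)$. The inductive step uses only that $y_T$ minimizes the cumulative loss $\sum_{t\le T}\ell_t$. I will also note that the comparator in the regret is unconstrained in $\mathbb{R}^K$. This is harmless, because the unconstrained minimizer is the average $y_T$, which automatically lies in the convex hull of the $b_t$.
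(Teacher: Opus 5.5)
Your proposal is correct and follows the paper's own argument. Like the paper, you use be-the-leader to reduce the regret to $\sum_t(\ell_t(x_t)-\ell_t(y_t))$, bound each term by $4B\|y_t-y_{t-1}\|_2$, apply the stability estimate $\|y_t-y_{t-1}\|_2\le 2B/t$, and sum the harmonic series to get $8B^2(1+\log T)$. Your version is cleaner in two respects: you state the be-the-leader lemma with the correct indexing ($y_t$ averages through round $t$ and $x_t=y_{t-1}$), which the paper's text conflates, and you derive $B\le 1$ from $\sum_i E_{t,i}\preceq\mathds{1}$, which is what actually makes the bound independent of the number of qubits and of $K$.
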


\section{Computational complexity of updates in Projected OGD}\label{apd:ComplexityOGD}

Computationally, the bottleneck step in~\Cref{alg:ogd} at time $t$ is to compute the projection of the Hermitian operator $P_t \coloneqq \omega_{t-1} - \eta \nabla_{t-1}$ onto the set of positive semidefinite matrices with unit trace. From Lemma~\ref{lemma:quadOverSimplex}, the projection can be computed by first performing diagonalization of the projection operator $P_t$ of size $d \times d$ in $\mathcal{O}(d^{\omega})$ time and then subsequently evaluating the parameter $\mu$ in $\mathcal{O}(d \log(d))$ time using the water-filling algorithm (see~\cite{gallager1968information}). Thus, the associated computational cost of evaluating projected OGD in~\Cref{alg:ogd} is $\mathcal{O}(d^{\omega})$. 

As Regularized-Follow-the-Leader (RFTL) updates with von Neumann entropy as the regularizer is equivalent to computing Gibbs state, the computational complexity of evaluating projected OGD updates is therefore at least as good as RFTL updates.

\end{document}